\documentclass[12pt]{amsart}
\usepackage[left=15mm, right=15mm, top=10mm, bottom=15mm]{geometry}

\usepackage[english]{babel}
\usepackage[T1]{fontenc}
\usepackage[utf8]{inputenc}

\theoremstyle{plain}
    \newtheorem{theorem}{Theorem}[section]

\theoremstyle{definition}

\usepackage[
    draft = false,
    unicode = true,
    colorlinks = true,
    allcolors = blue,
    hyperfootnotes = true,
    citecolor = red
]{hyperref}
\usepackage{amsfonts,amsmath,amssymb,amscd,amsthm,url}
\usepackage[inline]{enumitem}
\usepackage{graphicx,epsf,afterpage, wrapfig}
\usepackage{soul}
\usepackage{multicol}
\usepackage{array}
\usepackage{braket}
\usepackage{epigraph}
\usepackage{rotating, floatflt}
\usepackage{xstring} % Пакет для надежной работы со строками

\usepackage[
backend=biber,
style=numeric-comp,
sorting=none,
giveninits=true
]{biblatex}
\usepackage{csquotes}

\usepackage{xcolor}

\usepackage{ulem}

\usepackage{tikz}
\usetikzlibrary{positioning, calc, intersections, through, arrows, matrix, chains, math}
\usetikzlibrary{arrows.meta}
\usetikzlibrary{shadows}
\usetikzlibrary{graphs}
\usetikzlibrary{graphs.standard}
\usetikzlibrary{patterns}
\usetikzlibrary{decorations.pathreplacing,calligraphy,backgrounds}
\DeclareFieldFormat{usera}{\href{https://arxiv.org/abs/#1}{\texttt{arXiv:#1}}}

\renewbibmacro*{finentry}{\printfield{usera}\newunit\finentry}

\renewbibmacro{in:}{%
  \iffieldundef{usera}
    {\printtext{\bibstring{in}\intitlepunct}}
    {}%
}

\newcommand\norm[1]{\ensuremath{\left\lVert#1\right\rVert}}
\newcommand\abs[1]{\ensuremath{\left\lvert#1\right\rvert}}

\DeclareMathOperator{\Ker}{Ker}

\DeclareMathOperator{\tr}{tr}

\newcommand{\Gcal}{\mathcal{G}}
\newcommand{\Hcal}{\mathcal{H}}

\newcommand{\Ocal}{\mathcal{O}}

\newcommand{\Ucal}{\mathcal{U}}

\newcommand{\Abf}{{\ensuremath{\mathbf{A}}}}

\newcommand{\Ibf}{{\ensuremath{\mathbf{I}}}}

\newcommand{\Pbf}{{\ensuremath{\mathbf{P}}}}

\newcommand{\Rbf}{{\ensuremath{\mathbf{R}}}}

\newcommand{\Tbf}{{\ensuremath{\mathbf{T}}}}
\newcommand{\Ubf}{{\ensuremath{\mathbf{U}}}}
\newcommand{\Vbf}{{\ensuremath{\mathbf{V}}}}

\newcommand{\Phbf}{{\ensuremath{\mathbf{\Phi}}}}
\newcommand{\Xbf}{{\ensuremath{\mathbf{X}}}}
\newcommand{\Ybf}{{\ensuremath{\mathbf{Y}}}}
\newcommand{\Zbf}{{\ensuremath{\mathbf{Z}}}}

\newcommand{\defing}[1]{\textbf{\emph{\mathversion{bold}#1}}}

\newcommand{\R}{\ensuremath{\mathbb{R}}}

\newcommand{\Cx}{\ensuremath{\mathbb{C}}}

\renewcommand{\geq}{\geqslant}

\renewcommand{\leq}{\leqslant}

\newcounter{mcnt}

\newcounter{wordcnt}

\begin{document}

% \input{mains/meas-card}
% \input{mains/sing-chan}
% \input{mains/sing-states}
% \input{mains/chan-von-neumann}
% \input{mains/ka-trans}
% \input{mains/ka-disc}
% \title{Quantum measurement \\ Kolmogorov--Arnold representation theorem}
\title{Algebraic Kolmogorov--Arnold representation theorem \\ for quantum measurement}

\author{Sviatoslav V. Dzhenzher}

\begin{abstract}
We establish an operational framework connecting the classical Kolmogorov--Arnold (KA) representation theorem to quantum information theory. By introducing and proving an algebraic, bounded-degree polynomial version of the theorem, we demonstrate that any target physical property of an unentangled multi-qubit product state can be exactly decomposed using a finite, fixed set of local ``inner'' observables and a shallow architecture of univariate polynomials. 

We further analyze the stability of this Quantum Kolmogorov--Arnold (QKA) representation under adversarial perturbations. In stark contrast to the pathological instabilities and severe reparameterisation sensitivities inherent to the classical Kolmogorov--Arnold representation theorem, our algebraic quantum framework exhibits remarkable resilience. We prove that the representation remains stable against bounded physical perturbations acting on the inner measurement operators, and show via the Heisenberg picture that it is inherently immune to adversarial quantum channel attacks acting on the input states. 
\end{abstract}

\thanks{\hspace{-4mm}
S.\,V. Dzhenzher: sdjenjer@yandex.ru. orcid: 0009-0008-3513-4312
\\
% V.\,Zh. Sakbaev: fumi2003@mail.ru. orcid: 0000-0001-8349-1738
% \\
% V.\,Zh. Sakbaev: Keldysh Institute of Applied Mathematics of Russian Academy of Sciences 125047, Miusskaya pl. 4, Moscow, Russia
% \\
% All authors:
Moscow Institute of Physics and Technology 141701, Institutskii per. 9, Dolgoprudny, Russia}

\maketitle
\thispagestyle{empty}

\noindent \emph{Keywords:} Quantum Kolmogorov--Arnold representation, perturbation stability, quantum measurement.

\vspace{3mm}
\noindent \emph{MSC 2020:} 
81P68, % Quantum computation and quantum cryptography
41A63, % Multidimensional approximation
13P11. % Relations of commutative algebra with algebraic geometry
% 68T07  % Statistical learning theory (including deep learning, artificial neural networks)

\section{Introduction}\label{s:intro}

The challenge of efficiently approximating and representing multivariate functions lies at the heart of both classical and quantum machine learning. A foundational pillar of representation theory is the Kolmogorov--Arnold (KA) representation theorem \cite{Kolmogorov56, Arnold57, Kolmogorov57}, which states that any continuous multivariate function can be exactly decomposed into a finite composition of univariate functions and addition operations. For decades, this theorem remained primarily a theoretical curiosity due to the highly non-smooth and pathological nature of the underlying univariate functions. However, the optimistic views~\cite{Freedman24, Kurkova91} and the recent introduction of Kolmogorov--Arnold Networks (KANs)~\cite{KAN, KAN2} have revitalised this framework, demonstrating that parameterising univariate curves via splines or polynomials on the edges of a network provides an elegant, highly interpretable alternative to traditional Multi-Layer Perceptrons (MLPs).

As quantum computing expands into the domain of machine learning, translating successful classical architectures into the quantum regime has become a critical pursuit. This effort has led to the development of Quantum Kolmogorov--Arnold Networks (QKANs)~\cite{QKAN-enh-var, QKAN-china, QKAN-Ivashkov-etal, KANQAS}.
Existing QKAN architectures generally leverage quantum linear algebra primitives, such as the Quantum Singular Value Transformation (QSVT), to implement univariate edge activations directly on block-encoded matrices~\cite{QKAN-Ivashkov-etal}, or utilise Parameterised Quantum Circuits (PQCs) to parameterise activation functions within Born machine states~\cite{Werner-2025, QKAN-china}.

In this work, we approach the intersection of the Kolmogorov--Arnold theorem and quantum mechanics from a more fundamental operational perspective. Rather than building a specific machine learning ansatz, we aim to formulate a Quantum Kolmogorov--Arnold (QKA) representation theorem directly for quantum observables and states.
A closely related foundational bridge has been recently established by Freedman \cite{Freedman25}, who connected the Kolmogorov--Arnold representation to the study of recurrent quantum dynamics and the detection of hidden tensor factorisations.
Note that, while classical measurement theory is traditionally viewed as commutative, alternative operational frameworks, such as modern Koopman mechanics \cite{Brunton-2022}, can introduce non-commutativity into classical structures via the Poisson bracket \cite{Morgan-2020}. In contrast to these approaches that establish formal isomorphisms by extending classical mechanics, our algebraic framework directly addresses quantum measurements via a polynomial version of the KA theorem.
Following this operational direction, in Section~\ref{s:main}, we prove an algebraic, polynomial version of the KA theorem (Theorem~\ref{t:ka-poly}), and utilise it to construct universal, fixed ``inner'' observables that can represent any target property of a multi-qubit product state using a shallow combination of univariate polynomials, as in Theorem~\ref{t:qka-fact}.

After that, in Section~\ref{s:adversary}, following \cite{DzhenzherFreedman-25}, we investigate our QKA representation theorem in terms of stability under adversarial perturbations. We investigate two modes: when an adversary is a quantum channel acting on the state (Theorem~\ref{t:qka-stab-q-ch}), and when an adversary is a small perturbation of our ``inner'' observables (Theorem~\ref{t:qka-stab-meas}).
We find that, as opposed to \cite{DzhenzherFreedman-25}, we are able to withstand practically any adversarial attack (in the case of perturbations of inner observables, it should be a small enough perturbation).

Critically, while our construction establishes a valid representation theorem for static state verification and property estimation, it also exposes fundamental structural limitations. As we discuss in the concluding Section~\ref{s:conclusion}, attempts to generalise this exact framework to describe full unitary evolutions (or, generally speaking, quantum channels) face severe barriers: in those dynamic regimes, straightforward analogues of the Kolmogorov--Arnold theorem either collapse into trivial mathematical identities or become inherently incorrect due to the non-commutative and non-linear constraints of quantum dynamics.

\section{Background and main results}\label{s:main}

To establish the framework for our main results, we first review the mathematical statements of the classical Kolmogorov--Arnold theorem,
% and its polynomial restrictions,
before defining the notation for the quantum systems under consideration.

The classical Kolmogorov--Arnold representation theorem guarantees that for any $n>1$, there exist $n(2n+1)$ universal continuous univariate functions $\phi_{j,k}\colon[\,0,1\,]\to[\,0,1\,]$ such that for any continuous function $f\colon [\,0,1\,]^n \to \R$, there exist $2n+1$ continuous univariate functions $g_j\colon\R\to\R$ such that for any \(x_1,\ldots,x_n\in[\,0,1\,]\),
\[
    f(x_1, \dots, x_n) = \sum_{j=1}^{2n+1} g_j \left( \sum_{k=1}^n \phi_{j,k}(x_k) \right).
\]
In modern KAN architectures, the functions $\phi_{j,k}$ are learned or approximated using smooth functions or Chebyshev polynomials. In our algebraic treatment, we constrain this representation to polynomials of a bounded degree. Specifically, if a multivariate function is a polynomial of degree at most $d$, it can be exactly decomposed using universal \emph{linear} combinations inside the inner layer, which forms the basis of Theorem~\ref{t:ka-poly}.

Let $\Cx^{2^n}$ denote the Hilbert space of $n$ qubits. We denote the space of \defing{observables} (bounded Hermitian operators acting on this space) as $\Ocal(\Cx^{2^n})$.
A general $n$-qubit \defing{density matrix} $\rho$ is a positive semi-definite Hermitian matrix with a unit trace (the latter is required in physics; we will need only the fact that all states lie in some ball).
For the scope of this representation theorem, we consider an unentangled, factorised input state $\rho_{\text{in}}$ given by the tensor product:
\[
    \rho_{\text{in}} = \rho_1 \otimes \rho_2 \otimes \dots \otimes \rho_n,
\]
where each $\rho_i \in \Ocal(\Cx^2)$ represents the local state of the $i$-th qubit. 
As we will see, the expected value of any target observable $\Tbf \in \Ocal(\Cx^{2^n})$ with respect to $\rho_{\text{in}}$ can be viewed as a multivariate polynomial function mapping the $3n$ real coordinates to $\R$. Our goal is to decompose this expected value using a minimal, universal set of local quantum measurements.

\begin{theorem}[Quantum KA]\label{t:qka-fact}
    Let $n\geq 1$ be the number of qubits, and \(m \geq \binom{4n-1}{n}\).
    Then there exist universal ``inner'' observables
    \[
        \Abf_{j,k} \in \Ocal(\Cx^2),\qquad
        j=1,\ldots,m, \quad k=1,\ldots,n,
    \]
    such that for any ``target'' observable \(\Tbf\in \Ocal(\Cx^{2^n})\), there exist ``outer'' polynomials \(g_1,\ldots,g_m\colon\R\to\R\) of degree at most $n$, such that for any state \(\rho_{\text{in}}=\rho_1\otimes\ldots\otimes\rho_n\),
    \[
        \tr(\Tbf\rho_{\text{in}}) = \sum_{j=1}^m g_j\bigl(\tr(\Abf_{j,1}\rho_1)+\ldots+\tr(\Abf_{j,n}\rho_n)\bigr).
    \]
\end{theorem}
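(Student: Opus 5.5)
The plan is to reduce the statement to Theorem~\ref{t:ka-poly}, the algebraic polynomial KA theorem, applied in $N=3n$ real variables with degree $d=n$. The count $\binom{4n-1}{n}=\binom{N+d-1}{d}$ is exactly the dimension of the space of homogeneous forms of degree $d$ in $N$ variables. So Theorem~\ref{t:ka-poly} should say: for generic linear forms $\ell_1,\ldots,\ell_m$ on $\R^N$ with $m\geq\binom{N+d-1}{d}$, every polynomial of degree at most $d$ can be written as $\sum_j g_j(\ell_j(x))$ with univariate $g_j$ of degree at most $d$. The construction therefore has three steps: coordinatise the states, invoke the polynomial theorem, and realise the linear forms as local observables.

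\emph{Step 1 (coordinates).} Write each qubit state in Bloch form,
\[
\rho_k=\tfrac12\bigl(I+x_kX+y_kY+z_kZ\bigr), \qquad r_k=(x_k,y_k,z_k)\in\R^3 .
\]
This uses unit trace and Hermiticity. Expanding $\Tbf$ in the Pauli tensor basis $\sigma_{a_1}\otimes\cdots\otimes\sigma_{a_n}$ shows that
\[
F_{\Tbf}(r_1,\ldots,r_n)=\tr(\Tbf\rho_{\text{in}})
\]
is multilinear in the blocks $r_1,\ldots,r_n$. In particular it is a real polynomial of total degree at most $n$ in the $3n$ variables, with real coefficients because $\Tbf$ is Hermitian.

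\emph{Step 2 (polynomial KA).} Apply Theorem~\ref{t:ka-poly} with $N=3n$ and $d=n$. If it is only available for homogeneous forms, I would argue directly. For each $e\leq n$ the set of tuples $(\ell_1,\ldots,\ell_m)$ for which $\{\ell_j^e\}_j$ fails to span the degree-$e$ forms is a proper Zariski-closed set. It is proper because powers of linear forms span the forms of each degree (polarisation), and $m\geq\binom{N+e-1}{e}$ holds for every $e\leq n$. Avoiding the finite union of these closed sets over $e=0,\ldots,n$, I fix one tuple $\ell_j$, independent of $\Tbf$. Then I decompose $F_{\Tbf}=\sum_{e}F_e$ into homogeneous parts, write each $F_e=\sum_j c_{j,e}\ell_j^e$, and set
\[
g_j(t)=\sum_{e=0}^n c_{j,e}t^e ,
\]
which has degree at most $n$.

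\emph{Step 3 (realising the forms as observables).} Write $\ell_j(r_1,\ldots,r_n)=\sum_k a_{j,k}\cdot r_k$ with $a_{j,k}\in\R^3$, and define
\[
\Abf_{j,k}=a_{j,k}^{(1)}X+a_{j,k}^{(2)}Y+a_{j,k}^{(3)}Z .
\]
These are Hermitian. Since $\tr(\sigma_a\sigma_b)=2\delta_{ab}$ and the Paulis are traceless, $\tr(\Abf_{j,k}\rho_k)=a_{j,k}\cdot r_k$. Hence
\[
\sum_k\tr(\Abf_{j,k}\rho_k)=\ell_j(r_1,\ldots,r_n),
\]
and substituting into Step 2 gives the claimed identity for every product state.

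The main obstacle is Step 2. One genericity choice of inner forms must serve all degrees $e\leq n$ and all targets simultaneously, which requires the spanning fact for powers of linear forms in every degree together with the nonempty-intersection argument. Everything else is bookkeeping.
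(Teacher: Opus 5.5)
Your proposal is correct and is essentially the paper's proof. Bloch coordinates make $\tr(\Tbf\rho_{\text{in}})$ a real polynomial of degree at most $n$ in $3n$ variables, Theorem~\ref{t:ka-poly} is then applied with $3n$ variables and $d=n$ (giving exactly $\binom{4n-1}{n}$), and the linear forms are realised by the traceless Pauli combinations $\Abf_{j,k}=a_{j,k,x}\Xbf+a_{j,k,y}\Ybf+a_{j,k,z}\Zbf$. Your fallback genericity argument for Step~2 is also valid, whereas the paper proves Theorem~\ref{t:ka-poly} constructively with $\ell_{j,k}=\pi_k^j$, solving for each homogeneous degree $r$ a Vandermonde system whose nodes $\prod_k\pi_k^{r_k}$ are distinct by unique factorisation, so it too handles non-homogeneous polynomials degree by degree, as you do.
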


We will prove Theorem~\ref{t:qka-fact},
% and~\ref{t:qka-mixed},
using the following algebraic version of KA.

\begin{theorem}[Polynomial KA]\label{t:ka-poly}
    Let $n,d\geq 1$ and \(m \geq \binom{n+d-1}{d}\).
    Then there exist universal linear functions \(\ell_1,\ldots,\ell_m\colon\R^n\to\R\) of the form
    \[
        \ell_j(x_1,\ldots,x_n) = \sum_{k=1}^n \ell_{j,k}x_k,
        \qquad j=1,\ldots,m,
    \]
    such that for any polynomial $p\colon\R^n\to\R$ of degree at most $d$, there exist polynomials \(q_1,\ldots,q_m\colon\R\to\R\) of degree at most $d$ such that for any \(x_1,\ldots,x_n\in\R\),
    \[
        p(x_1,\ldots,x_n) = \sum_{j=1}^m q_j\left(\ell_j(x_1,\ldots,x_n)\right).
    \]
    Moreover, one may take
    \[
        \ell_{j,k} = \pi_k^j,
    \]
    where \(\pi_k\) is the $k$-th prime number.
\end{theorem}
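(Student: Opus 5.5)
The plan is to reduce the statement to the classical fact that $d$-th powers of linear forms span the space of homogeneous polynomials of degree $d$, applied one homogeneous degree at a time. Write $V_e$ for the space of homogeneous polynomials of degree $e$ in $x_1,\ldots,x_n$. Then $\dim V_e=\binom{n+e-1}{e}$, which is nondecreasing in $e$ and therefore at most $\binom{n+d-1}{d}$ for $e\le d$.

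Decompose $p=\sum_{e=0}^d p_e$ with $p_e\in V_e$. It suffices to show that for every $e\le d$ the powers $\ell_1^e,\ldots,\ell_m^e$ span $V_e$. Then $p_e=\sum_j c_{j,e}\,\ell_j^e$ for some real $c_{j,e}$, and we set
\[
    q_j(t)=\sum_{e=0}^d c_{j,e}\,t^e,
\]
which has degree at most $d$. The constant term is covered by the case $e=0$, since $\ell_j^0=1$. We may also assume $m=\binom{n+d-1}{d}$ exactly, because any extra functions $\ell_j$ can be given $q_j=0$.

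To prove the spanning claim for a fixed $e$, expand by the multinomial theorem:
\[
    \ell_j^e=\sum_{|\alpha|=e}\binom{e}{\alpha}\,\ell_j^{\alpha}x^{\alpha},
    \qquad \ell_j^{\alpha}=\prod_k \ell_{j,k}^{\alpha_k}.
\]
Since the multinomial coefficients are nonzero, the powers $\ell_j^e$ span $V_e$ exactly when the $m\times\dim V_e$ matrix $M_e=(\ell_j^{\alpha})_{j,\alpha}$ has full column rank. With $\ell_{j,k}=\pi_k^j$ we get
\[
    \ell_j^{\alpha}=\Bigl(\prod_k\pi_k^{\alpha_k}\Bigr)^j=N_\alpha^{\,j},
    \qquad N_\alpha=\prod_k\pi_k^{\alpha_k}.
\]
By unique prime factorisation, the integers $N_\alpha$ are pairwise distinct for distinct multi-indices $\alpha$. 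So each column of $M_e$ has the form $(N_\alpha,N_\alpha^2,\ldots,N_\alpha^m)^{T}$. Taking the first $\dim V_e\le m$ rows gives a square matrix which, after dividing column $\alpha$ by $N_\alpha>0$, is a Vandermonde matrix in distinct nodes $N_\alpha$. It is therefore invertible, so $M_e$ has full column rank.

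The main point needing care is the spanning claim itself: recognising that the prime choice turns the coefficient matrix into a Vandermonde matrix, and that one set of $m$ linear forms works simultaneously for all degrees $e\le d$. Both follow because $M_e$ has exactly the Vandermonde structure above, with $m\ge\dim V_e$ rows for every $e\le d$. The rest is bookkeeping.
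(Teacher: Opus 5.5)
Your proof is correct and follows the paper's argument: you split by homogeneous degree, match coefficients via the multinomial expansion, and use the Vandermonde matrix in the nodes $\prod_k \pi_k^{\alpha_k}$, which are distinct by unique factorisation, with $m \geq \binom{n+d-1}{d} \geq \binom{n+e-1}{e}$ giving enough unknowns in each degree. You are slightly more careful than the paper, which writes $q_j(t)=\sum_{r=1}^d q_{j,r}t^r$ and so drops the constant term that your case $e=0$ covers, and which does not spell out the rescaling of each column by $N_\alpha$ needed to obtain a genuine square Vandermonde matrix.
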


In the classical KA theorem, the ``inner'' functions $\phi_{j,k}$ cannot be chosen linear, and they are inherently non-smooth, non-differentiable fractals~\cite{Lorentz-Golitschek-Makovoz}.
A fundamental result by Vitushkin~\cite{Vitushkin-1954, Vitushkin-overview} established that if one restricts the representation to continuously differentiable functions, the exact superposition formula fails for generic multivariate functions due to strict information-theoretic and smooth dimension barriers. 

Theorem~\ref{t:ka-poly} states that if the ``target'' function $f$ is polynomial (one may say, is good enough), the inner functions can indeed be chosen strictly linear, which provides an algebraic shortcut that avoids the pathological fractal geometry of the classical representation.
Note that the analogous theorem could be proven via the polarisation identity, which for two variables looks like \(x_1x_2=\frac{(x_1+x_2)^2 - (x_1-x_2)^2}{4}\), but it would demand greater~$m$.

\begin{proof}[Proof of Theorem~\ref{t:ka-poly}]
    Take \(\ell_{j,k} = \pi_k^j\). Let
    \[
        q_j(t) = \sum_{r=1}^d q_{j,r}t^r.
    \]
    Let \(p_{r_1,\ldots,r_n}\) be the coefficient of the monomial \(x_1^{r_1}\ldots,x_n^{r_n}\) in the polynomial $p$.
    The contribution of $q_j$ to $p_{r_1,\ldots,r_n}$ is determined only by the term \(q_{j,r}t^r\) for \(r:=r_1+\ldots+r_n\), and is equal to
    \[
        \binom{r}{r_1,\ldots,r_n}q_{j,r}\prod_{k=1}^n\pi_k^{jr_k}.
    \]
    Consider the Vandermonde matrix \(V\) with entries
    \[
        V_{(r_1,\ldots,r_n),j} = \left(\prod_{k=1}^n\pi_k^{r_k}\right)^j.
    \]
    Thus, we obtain the system of \(\binom{n+d}{n}\) equations of the form
    \[
        \frac{p_{r_1,\ldots,r_n}}{\binom{r}{r_1,\ldots,r_n}} =
        \sum_{j=1}^m q_{j,r}V_{(r_1,\ldots,r_n),j}.
    \]
    It remains to notice that, actually, we have $d+1$ independent systems for different $r$.
    For a fixed $r$, the Vandermonde matrix $V$ has full rank (by the prime factorisation theorem), and thus the system has a solution, since the number of variables is
    \[
        m \geq \binom{n+d-1}{d} = \binom{n+d-1}{n-1} \geq \binom{n+r-1}{n-1}.
    \]
\end{proof}

\begin{proof}[Proof of Theorem~\ref{t:qka-fact}]
    Denote by \(\Xbf,\Ybf,\Zbf\) the Pauli matrices.
    It is well known that for any density matrix \(\rho_i\) of one qubit, there is a representation in the form
    \[
        \rho_i = \frac{1}{2}\left(\Ibf + x_i\Xbf + y_i\Ybf + z_i\Zbf\right),
    \]
    where \((x_i,y_i,z_i)\in B^3 = \{(x,y,z)\in\R^3:x^2+y^2+z^2\leq 1\}\).
    Thus, we may define the continuous function \(f\colon (B^3)^n\to \R\) on the compact metric space \((B^3)^n\) by
    \[
        f(x_1,y_1,z_1,\ldots,x_n,y_n,z_n) := \tr(\Tbf\rho_1\otimes\ldots\otimes\rho_n).
    \]

    In this paragraph, we show that $f$ is the polynomial of degree at most $n$.
    Indeed, the target observable can be represented as
    \[
        \Tbf = \sum_{\alpha_1,\ldots,\alpha_n=0}^3 \tau_{\alpha_1,\ldots,\alpha_n} \sigma_{\alpha_1}\otimes\ldots\otimes\sigma_{\alpha_n},
    \]
    where \(\sigma_i\in\{\Ibf,\Xbf,\Ybf,\Zbf\}\) and \(\tau_{\alpha_1,\ldots,\alpha_n}\in\Cx\).
    Then,
    \begin{multline*}
        f(x_1,y_1,z_1,\ldots,x_n,y_n,z_n) = \tr(\Tbf\rho_1\otimes\ldots\otimes\rho_n) = \\ =
        \sum_{\alpha_1,\ldots,\alpha_n=0}^3 \tau_{\alpha_1,\ldots,\alpha_n} \tr(\sigma_{\alpha_1}\otimes\ldots\otimes\sigma_{\alpha_n}\cdot\rho_1\otimes\ldots\otimes\rho_n) = \\ =
        \sum_{\alpha_1,\ldots,\alpha_n=0}^3 \tau_{\alpha_1,\ldots,\alpha_n} \prod_{k=1}^n\tr(\sigma_{\alpha_k}\rho_k),
    \end{multline*}
    where \(\tr(\sigma_{\alpha_k}\rho_k) \in \{1,x_k,y_k,z_k\}\).
    Thus, $f$ is the polynomial of $3n$ variables and degree at most $n$.

    Now, for any \(j=1,\ldots,m\) and \(k=1,\ldots,n\), let an observable \(\Abf_{j,k}\) be of the form
    \[
        \Abf_{j,k} = a_{j,k,x}\Xbf + a_{j,k,y}\Ybf + a_{j,k,z}\Zbf.
    \]
    Then, it is well known that
    \[
        \tr(\Abf_{j,k}\rho_k) = a_{j,k,x}x_k + a_{j,k,y}y_k + a_{j,k,z}z_k.
    \]
    Thus, we obtain that each $g_j$ has as its argument the linear combination of variables $x_k,y_k,z_k$, and this theorem is a particular case of Theorem~\ref{t:ka-poly} for $3n$ variables and $d=n$.
\end{proof}

\section{Quantum Kolmogorov--Arnold stability}\label{s:adversary}

Now we turn to QKA stability, which was first described in \cite{DzhenzherFreedman-25} for the classical KA.
It was shown there that an outer target function $f$ can withstand adversarial attacks (a.k.a. reparameterisations of the inner layers), if these attacks are homeomorphisms \(\R^{2n+1}\to\R^{2n+1}\) and their at most countable collection is reported in advance.
Here, we may distinguish two global different cases: if some adversarial attack acts on our inner measurement operators \(\Abf_j\), or if some adversarial quantum channel acts on our input state \(\rho_{\text{in}}\).

In stark contrast, our algebraic framework rules out such pathological behaviour. Because the inner operations in Theorem~\ref{t:qka-fact} are formulated via linear quantum measurements on finite-dimensional space, we can establish robust stability bounds.
Let us formalise the ``adversary'' not as a topological homeomorphism, but as a physical perturbation acting on the universal inner observables.

\begin{theorem}[QKA Stability under Measurement Attacks]\label{t:qka-stab-meas}
    Let $n\geq 1$ be the number of qubits, and \(m \geq \binom{4n-1}{n}\).
    Then there exist universal ``inner'' observables
    \[
        \Abf_{j,k} \in \Ocal(\Cx^2),\qquad
        j=1,\ldots,m, \quad k=1,\ldots,n,
    \]
    such that for any ``target'' observable \(\Tbf\in \Ocal(\Cx^{2^n})\), any sufficiently small \(\varepsilon>0\), and
    any disturbances
    \[
        \Delta_{j,k} \in \Ocal(\Cx^2),\qquad
        \norm{\Delta_{j,k}} \leq \varepsilon,
    \]
    there exist ``outer'' uniformly continuous functions \(g_1,\ldots,g_m\colon\R\to\R\) such that for any state \(\rho_{\text{in}}=\rho_1\otimes\ldots\otimes\rho_n\),
    \[
        \tr(\Tbf\rho_{\text{in}}) = \sum_{j=1}^m g_j\bigl(\tr((\Abf_{j,1}+\Delta_{j,1})\rho_1)+\ldots+\tr((\Abf_{j,n}+\Delta_{j,n})\rho_n)\bigr).
    \]
\end{theorem}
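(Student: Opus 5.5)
We reuse the construction from the proof of Theorem~\ref{t:qka-fact}: the inner observables are $\Abf_{j,k}=\pi_{3k-2}^j\Xbf+\pi_{3k-1}^j\Ybf+\pi_{3k}^j\Zbf$, which are traceless. The perturbed observable $\Abf_{j,k}+\Delta_{j,k}$ is again Hermitian, so it expands as $\Abf_{j,k}+\Delta_{j,k}=\delta_{j,k,0}\Ibf+a'_{j,k,x}\Xbf+a'_{j,k,y}\Ybf+a'_{j,k,z}\Zbf$ with real coefficients. Each coefficient is within $\varepsilon$ of the unperturbed one, since $|\tr(\sigma\Delta)|/2\leq\norm{\Delta}$.

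Because $\tr\rho_k=1$, the $j$-th argument of $g_j$ becomes
\[
c_j+\ell'_j(x_1,y_1,z_1,\ldots,x_n,y_n,z_n),
\]
where $c_j=\sum_k\delta_{j,k,0}$ is a constant and $\ell'_j$ is a linear form whose coefficient vector is an $\varepsilon$-perturbation of $(\pi_1^j,\ldots,\pi_{3n}^j)$. The constant shift is harmless: if $q_j$ is a polynomial, then so is $t\mapsto q_j(t-c_j)$. So it suffices to prove that Theorem~\ref{t:ka-poly} still holds with the perturbed coefficients $\ell'_{j,k}$. The resulting $g_j$ will be polynomials of degree at most $n$. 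These are not globally uniformly continuous on $\R$, so we will do the following. The arguments range over a compact interval, because $(x_k,y_k,z_k)\in B^3$ and the coefficients are bounded. We modify each $g_j$ outside a neighbourhood of this interval, for instance by extending it constantly, which makes it uniformly continuous without changing any values that are actually attained.

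The key step is to repeat the linear-algebra argument of Theorem~\ref{t:ka-poly} for each fixed degree $r\leq n$. The coefficient of the monomial $\prod_i u_i^{r_i}$ (with $|r|=r$ and the $u_i$ running over the $3n$ coordinates) coming from $q_{j,r}(\ell'_j)^r$ equals $\binom{r}{r_1,\ldots}\,q_{j,r}\prod_i(\ell'_{j,i})^{r_i}$. Solvability for every $p$ therefore amounts to the following matrix having full row rank:
\[
M^{(r)}_{(r_1,\ldots),j}=\prod_i(\ell'_{j,i})^{r_i},
\]
with rows indexed by multi-indices of total degree exactly $r$ and $j=1,\ldots,m$. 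In the unperturbed case this is the Vandermonde matrix in the distinct nodes $\prod_i\pi_i^{r_i}$ (distinct by unique factorisation). It has full row rank as soon as $m\geq\binom{3n+r-1}{r}$, which holds because $m\geq\binom{4n-1}{n}$.

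Full rank is an open condition: a fixed nonzero maximal minor is a polynomial in the $\ell'_{j,i}$, so it stays nonzero under small perturbations. This is where we expect the main obstacle. The proposal is to take the finitely many degrees $r=0,\ldots,n$, choose one nonzero maximal minor of each $M^{(r)}$ at the prime-power point, and let $\varepsilon_0>0$ be small enough that all of these minors remain nonzero on the $\varepsilon_0$-box around that point. The threshold $\varepsilon_0$ depends only on $n$ and $m$, not on $\Tbf$ or the $\Delta_{j,k}$. This makes precise "sufficiently small $\varepsilon$", and it also keeps $\Abf_{j,k}$ universal.

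For such $\varepsilon$, we solve for the $q_{j,r}$ degree by degree and set $g_j(t)=\sum_r q_{j,r}(t-c_j)^r$, truncated as above. This gives the claimed identity for every product state $\rho_{\text{in}}$, because both sides are equal as polynomials in the Bloch coordinates.
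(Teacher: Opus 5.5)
Your proof is correct, but it takes a different route from the paper. The paper argues perturbatively in the space $\Gcal$ of outer tuples. It takes a bounded right inverse $\Rbf$ of the linear map $\Pbf$ that sends a tuple $\vec g$ to $\sum_j g_j(x_j)$. It then introduces $\delta\Pbf(\vec g)=\sum_j\bigl(g_j(x_j+\delta x_j)-g_j(x_j)\bigr)$ and bounds it by $C\varepsilon\norm{\vec g}$, using an exact Taylor expansion and boundedness of differentiation on polynomials of degree at most $n$. Finally it obtains $\vec g$ as the Banach fixed point of $\vec g\mapsto\Rbf(\Tbf)-\Rbf\,\delta\Pbf(\vec g)$.

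You instead observe that the perturbed inner maps are still affine in the Bloch coordinates. You absorb the identity components of the $\Delta_{j,k}$ (via $\tr\rho_k=1$) into a shift $t\mapsto t-c_j$ of the outer polynomials, so the problem again splits into independent homogeneous degree-$r$ systems. You then use that a chosen nonzero maximal minor of each generalised Vandermonde matrix $M^{(r)}$ stays nonzero on a small box of coefficients.

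At bottom, both arguments prove that surjectivity of the representation map survives small perturbations: yours via determinants, the paper's via a contraction (Neumann-series) argument.

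What your version buys:
\begin{itemize}
\item It is purely finite-dimensional and keeps the outer functions polynomial of degree at most $n$.
\item It shows that Theorem~\ref{t:ka-poly} remains valid for all inner coefficients in a neighbourhood of $(\pi_i^j)$.
\item It makes explicit the truncation needed for the $g_j$ to be uniformly continuous on all of $\R$, which the paper leaves implicit.
\item It never has to identify the image of the representation map. In the paper, $\Pbf$ and $\delta\Pbf$ really have to be read with codomain the polynomials of degree at most $n$ in the $3n$ Bloch coordinates, not $\Ocal(\Cx^{2^n})$. For a general tuple, $\sum_j g_j(x_j)$ is not multi-affine in the Bloch vectors, so it is not of the form $\tr(\Tbf\rho_{\text{in}})$.
\end{itemize}

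What the paper's version buys: it controls how the solution moves. It produces $\vec g$ as a small correction of the unperturbed $\vec g_0=\Rbf(\Tbf)$, with $\norm{\vec g-\vec g_0}=O(\varepsilon)\norm{\vec g_0}$. It also handles the constant part of $\delta x_j$ inside the Taylor estimate, without needing your degree-by-degree decoupling.
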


\begin{proof}
    Take \(\Abf_{j,k}\) from Theorem~\ref{t:qka-fact}, and denote
    \[
        x_j=x_j(\rho_{\text{in}}) := \tr(\Abf_{j,1}\rho_1)+\ldots+\tr(\Abf_{j,n}\rho_n).
    \]
    Since all \(\rho_{\text{in}}\) lie in the unit ball, we may assume that the domain of all polynomials lies in a compact~\([\,-K,K\,]\), and so we may equip all the functions with the sup-norm on this segment.
    Moreover, we may consider the Banach space \(\Gcal\) of tuples \(\vec g=(g_1,\ldots,g_m)\) of polynomials of degree at most $n$ with the norm
    \[
        \norm{\vec g} := \max_{j=1,\ldots,m} \norm{g_j}.
    \]

    Define the linear bounded operator \(\Pbf\colon \Gcal\to\Ocal(\Cx^{2^n})\) by the rule
    \[
        \Pbf\colon \vec g\mapsto \Tbf,
        \qquad
        \tr(\Tbf\rho_{\text{in}}) = \sum_{j=1}^m g_j(x_j).
    \]
    By Theorem~\ref{t:qka-fact}, this operator is surjective.
    By the bounded inverse theorem, there exists an inverse linear bounded operator \(\Rbf = \left(\Pbf|_{(\Ker\Pbf)^\perp}\right)^{-1}\colon \Ocal(\Cx^{2^n})\to(\Ker\Pbf)^\perp\subset \Gcal\), which is simply a right inverse to \(\Pbf\).
    
    Now, fix any observable \(\Tbf\in \Ocal(\Cx^{2^n})\) and disturbances \(\Delta_{j,k} \in \Ocal(\Cx^2)\) with \(\norm{\Delta_{j,k}} \leq \varepsilon\).
    We will specify $\varepsilon$ later.
    Denote
    \[
        \delta x_j = \delta x_j(\rho_{\text{in}}) := \tr(\Delta_{j,1}\rho_1)+\ldots+\tr(\Delta_{j,n}\rho_n),
    \]
    and define the linear bounded operator \(\delta\Pbf\colon \Gcal\to\Ocal(\Cx^{2^n})\) by the rule
    \[
        \delta\Pbf\colon \vec g\mapsto \delta\Tbf,
        \qquad
        \tr(\delta\Tbf\rho_{\text{in}}) = \sum_{j=1}^m \bigl(g_j(x_j+\delta x_j) - g_j(x_j)\bigr).
    \]
    
    At last, we are ready to define \(\vec{g}=(g_1,\ldots,g_m)\in\Gcal\).
    Apply Theorem~\ref{t:qka-fact} without any disturbances in order to obtain \(g_{j,0}\) such that
    \[
        \tr(\Tbf\rho_{\text{in}}) = \sum_{j=1}^m g_{j,0}(x_j).
    \]
    In our new notations, this is just
    \[
        \vec{g}_0 = (g_{1,0},\ldots,g_{m,0}) = \Rbf(\Tbf).
    \]
    Here goes a trick: below, we will show that $\varepsilon$ can be chosen sufficiently small, so that the operator \(\Rbf\circ\delta\Pbf\) is a contraction.
    Then, by the Banach fixed-point theorem, there exists a fixed-point \(\vec{g} \in \Gcal\) such that
    \[
        \vec g = \vec{g}_0-\Rbf\left(\delta\Pbf(\vec{g})\right).
    \]
    Acting by \(\Pbf\), we obtain
    \[
        \Pbf(\vec g) = \Tbf - \delta\Pbf(\vec{g}),
    \]
    which means that
    \[
        \tr(\Tbf\rho_{\text{in}}) = \sum_{j=1}^m g_j(x_j) + \sum_{j=1}^m \bigl(g_j(x_j+\delta x_j) - g_j(x_j)\bigr) =
        \sum_{j=1}^m g_j(x_j+\delta x_j).
    \]
    This means that what we have found is exactly the desired tuple $\vec{g}$.

    It remains to estimate \(\varepsilon\).
    It is clear that
    \[
        \abs{\delta x_j} \leq n\varepsilon.
    \]
    Since
    \[
        \abs{g_j(x_j + \delta x_j) - g_j(x_j)} \leq \sum_{k=1}^n \frac{1}{k!}\abs{g_j^{(k)}(x_j)(\delta x_j)^k},
    \]
    and the differential operator is linear and bounded on $\Gcal$, we obtain that, for some constant $C=C(n,m,K)$,
    \[
        \norm{\delta\Pbf(\vec g)} \leq C\varepsilon \norm{\vec g}.
    \]
    Thus, it remains to take any \(\varepsilon < \min(1,1/C)\).
\end{proof}

Now we focus to stability under quantum attacks. Recall that a quantum channel is a linear continuous operator that maps quantum states in another quantum states.

\begin{theorem}[QKA Stability under Quantum Attacks]\label{t:qka-stab-q-ch}
    Let $n\geq 1$ be the number of qubits, and \(m \geq \binom{4n-1}{n}\).
    Then there exist universal ``inner'' observables
    \[
        \Abf_{j,k} \in \Ocal(\Cx^2),\qquad
        j=1,\ldots,m, \quad k=1,\ldots,n,
    \]
    such that for any ``target'' observable \(\Tbf\in \Ocal(\Cx^{2^n})\) and any quantum channel \(\Phbf\),
    there exist ``outer'' uniformly continuous functions \(g_1,\ldots,g_m\colon\R\to\R\) such that for any state \(\rho_{\text{in}}=\rho_1\otimes\ldots\otimes\rho_n\),
    \[
        \tr(\Tbf\Phbf(\rho_{\text{in}})) = \sum_{j=1}^m g_j\bigl(\tr(\Abf_{j,1}\rho_1)+\ldots+\tr(\Abf_{j,n}\rho_n)\bigr).
    \]
\end{theorem}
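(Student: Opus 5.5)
We will pass to the Heisenberg picture and reduce directly to Theorem~\ref{t:qka-fact}. Take the universal inner observables \(\Abf_{j,k}\) given by Theorem~\ref{t:qka-fact} (in its proof they are traceless combinations of \(\Xbf,\Ybf,\Zbf\) with coefficients built from \(\pi_k^j\)). These do not depend on \(\Tbf\) or on \(\Phbf\), so they are universal in the required sense.

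Fix \(\Tbf\) and a quantum channel \(\Phbf\) acting on \(n\)-qubit states. Being linear on the finite-dimensional space of operators, \(\Phbf\) has a Hilbert--Schmidt adjoint \(\Phbf^*\), defined by \(\tr(\Tbf\Phbf(\rho)) = \tr(\Phbf^*(\Tbf)\rho)\) for all \(\rho\). The first step is to check that \(\Phbf^*(\Tbf)\in\Ocal(\Cx^{2^n})\), i.e.\ that it is Hermitian. For this it suffices that \(\Phbf\) preserves Hermiticity, \(\Phbf(\rho^\dagger)=\Phbf(\rho)^\dagger\). This holds because \(\Phbf\) maps states to states and the states span the Hermitian matrices over \(\R\). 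If one prefers, it also follows from a Kraus representation. Then \(\tr(\Phbf^*(\Tbf)\rho)=\overline{\tr(\Tbf\Phbf(\rho^\dagger))}\), which gives Hermiticity.

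Having this, apply Theorem~\ref{t:qka-fact} to the target observable \(\Tbf' := \Phbf^*(\Tbf)\). We obtain polynomials \(g_1,\ldots,g_m\) of degree at most \(n\) with
\[
\tr(\Tbf\Phbf(\rho_{\text{in}}))=\tr(\Tbf'\rho_{\text{in}})=\sum_{j=1}^m g_j\bigl(\tr(\Abf_{j,1}\rho_1)+\ldots+\tr(\Abf_{j,n}\rho_n)\bigr)
\]
for every product state. Polynomials are not uniformly continuous on all of \(\R\), so we will fix this as follows. Since each \(\rho_k\) lies in the unit Bloch ball, the arguments lie in a compact segment \([\,-K,K\,]\), where \(K=\sum_k\norm{\Abf_{j,k}}\) maximised over \(j\). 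We replace each \(g_j\) by the function that agrees with it on \([\,-K,K\,]\) and is constant outside. The result is continuous and eventually constant, hence uniformly continuous, and the identity is unchanged.

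The only step needing care is the Hermiticity of \(\Phbf^*(\Tbf)\), i.e.\ that the Heisenberg-picture observable is genuinely an observable. If the channel is only assumed to be linear and state-preserving, one must argue through the real span of states as above. Once that is settled, everything is a direct invocation of Theorem~\ref{t:qka-fact}. In fact, Theorem~\ref{t:qka-fact} would apply even to complex \(\tau\)-coefficients, provided we allow complex-valued outer polynomials. So the Hermiticity check is what keeps the \(g_j\) real-valued.
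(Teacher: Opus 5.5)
Your proposal is correct and follows the paper's own argument: pass to the Heisenberg picture via $\tr(\Tbf\Phbf(\rho_{\text{in}}))=\tr(\Phbf^*(\Tbf)\rho_{\text{in}})$ and apply Theorem~\ref{t:qka-fact} to the observable $\Phbf^*(\Tbf)$. Your two extra checks are details the paper's one-line proof leaves implicit, and you handle both correctly: that $\Phbf^*(\Tbf)$ is Hermitian, and that the polynomial outer functions must be frozen outside $[\,-K,K\,]$ to be genuinely uniformly continuous on all of $\R$.
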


\begin{proof}
    It is well known that instead of the Schr\"{o}dinger representation $\Phbf$, we may consider the Heisenberg representation \(\Phbf^*\), acting on the space of observables by the rule
    \[
        \tr(\Tbf\Phbf(\rho_{\text{in}})) = \tr(\Phbf^*(\Tbf)\rho_{\text{in}}).
    \]
    This makes the theorem a trivial corollary of Theorem~\ref{t:qka-fact}, applied to the observable \(\Phbf^*(\Tbf)\).
\end{proof}

\section{Conclusions and future outlook}\label{s:conclusion}

In this work, we have established an operational bridge between the classical Kolmogorov--Arnold representation theorem and quantum mechanics. By formulating and proving an algebraic, polynomial version of the theorem (Theorem~\ref{t:ka-poly}), we demonstrated that any target property of a multi-qubit unentangled state can be exactly decomposed using a finite, fixed set of local ``inner'' observables combined with shallow univariate polynomials (Theorem~\ref{t:qka-fact}). 

A major finding of our approach lies in its robustness against noise. While classical Kolmogorov--Arnold networks are known to exhibit pathological instability and severe sensitivity under continuous reparameterisations of the hidden layers~--- as rigorously studied in \cite{DzhenzherFreedman-25}~--- our algebraic quantum framework circumvents these vulnerabilities.
We proved that the proposed representation can withstand physical perturbations up to a bounded threshold $\varepsilon$ acting on the inner measurement operators (Theorem~\ref{t:qka-stab-meas}). Furthermore, by shifting to the Heisenberg picture, we showed that our setup is inherently immune to adversarial quantum channel attacks acting on the input states (Theorem~\ref{t:qka-stab-q-ch}).

Critically, our positive results heavily rely on the assumption that the input state $\rho_{\text{in}}$ remains factorised. If we extend this framework to general, non-factorised (entangled) states, the polynomial structure changes drastically. Instead of mapping $3n$ independent local coordinates, the expectation value $\tr(\Tbf\rho_{\text{in}})$ begins to depend on the full set of exponentially many density matrix coefficients. This leads to a severe explosion in the number of unknown polynomial coefficients and degrees, which effectively prevents us from obtaining any structurally meaningful or tight representation bound in the presence of global entanglement. Probably, for this case, we need a technique that does not use a polynomial structure and is closer to the classical KA proof.

Moreover, while this framework provides a powerful tool for static state verification and property estimation, it also exposes key structural boundaries.
While physics-informed KANs have been successfully deployed to forecast specific temporal quantum dynamics via Ehrenfest theorems \cite{QKAN-time-series} or parameterise strongly correlated wave functions \cite{QKAN-spins}, our work exposes the boundary where a native, exact operator algebraic analogue for full unitary transformations remains absent. Specifically, attempts to generalise this exact representation to describe full quantum dynamics face severe mathematical barriers. When moving from static state properties to full unitary evolutions $\Ubf \in \mathcal{SU}(2^n)$ or general quantum channels $\Phbf$, straightforward analogues of the Kolmogorov--Arnold theorem either collapse into trivial identities or break down entirely. This failure stems directly from the non-commutative structure of the underlying operator algebra and the non-linear constraints governing quantum evolution.
Interestingly, alternative approaches using the Kolmogorov--Arnold theorem have been proposed \cite{Freedman25} to bypass these exact constraints, focusing instead on algorithms for detecting low-volume approximate recurrence in unitary dynamics.

In order to illustrate these dynamic limitations, we present below two straightforward quantum versions of the Kolmogorov--Arnold theorem for unitary evolutions, both of which turn out to be conceptually or constructively trivial.

\begin{itemize}
    \item \textbf{The trivial version:} \emph{For the number $n \geq 1$ of main qubits, there exist a number $m = m(n)$ of ancilla qubits and a universal ancilla state $\psi \in \Cx^{2^m}$, such that for any target operator \(\Ubf\in \Ucal(\Hcal^{2^n})\), there exists an operator \(\Vbf\in \Ucal(\Cx^{2^{n+m}})\) such that for any state $\rho$ on $n$ qubits,}
    \[
        \Ubf\rho\Ubf^\dagger = \tr_{ancilla} \left(\Vbf \cdot \left(\rho \otimes \ket{\psi}\bra{\psi}\right) \cdot \Vbf^\dagger\right).
    \]
    This statement is contextually trivial. Since the state $\psi$ is fixed and universal, one can simply choose $\Vbf = \Ubf \otimes \Ibf$, where $\Ibf$ is the identity operator on the ancilla space. Upon taking the partial trace, the state $\psi$ trivially traces out to unity, yielding $\Ubf\rho\Ubf^\dagger$ identically. This construction completely bypasses any meaningful structural decomposition or reduction.
    
    \item \textbf{The presumably trivial version:} Alternatively, one might attempt to swap the quantifiers to make the inner operator universal. Specifically, \emph{for the number $n \geq 1$ of main qubits, there exist a number $m = m(n)$ of ancilla qubits and a universal operator $\Vbf \in \Ucal(\Cx^{2^{n+m}})$, such that for any target operator $\Ubf \in \Ucal(\Hcal^{2^n})$, there exists a state $\psi \in \Cx^{2^m}$ such that for any state $\rho$ on $n$ qubits, the same identity holds:}
    \[
        \Ubf\rho\Ubf^\dagger = \tr_{ancilla} \left(\Vbf \cdot \left(\rho \otimes \ket{\psi}\bra{\psi}\right) \cdot \Vbf^\dagger\right).
    \]
    While this formulation might initially seem non-trivial, it is presumably trivial under a sufficiently large $m(n)$. By choosing a large enough ancilla register, the universal operator $\Vbf$ can be constructed as a programmable quantum processor. In this scenario, the vector $\psi$ merely acts as a high-dimensional register that encodes the continuous parameters or matrix elements of the target unitary $\Ubf$, which $\Vbf$ then executes via controlled operations. Consequently, this version does not achieve a genuine Kolmogorov--Arnold-type reduction; it simply shifts the entire algebraic complexity of the unitary $\Ubf$ into the definition of the state $\psi$.
\end{itemize}

Consequently, a fascinating and critical direction for future research is the formulation of a true, native quantum analogue of the Kolmogorov--Arnold theorem for unitary evolutions that natively respects the Lie algebraic structure of quantum gates. Developing a framework that can decompose complex, highly-entangling multi-qubit unitaries into a composition of univariate dynamic primitives remains an open challenge. Resolving this problem could pave the way for entirely new classes of highly interpretable, hardware-efficient quantum machine learning architectures.

\section*{Data availability}

The research has no associated data.

\section*{Conflict of interest}

The author confirms that he has no conflict of interest.

\printbibliography
% % \printbibitembibliography

\end{document}